\newtheorem{theorem}{Theorem}[section]
\newtheorem{lemma}[theorem]{Lemma}
\newtheorem{corollary}[theorem]{Corollary}
\providecommand{\comment}[1]{}
\renewcommand{\comment}[1]{{\color{red} #1}}
\newcommand{\littleModule}{\textit{A}}
\newcommand{\middleModule}{\textit{B}}
\newcommand{\bigModule}{\textit{C}}
\newcommand{\pattRandom}{\textit{`random'}}
\newcommand{\pattDec}{\textit{`decreasing'}}
\newcommand{\pattInc}{\textit{`increasing'}}
\newcommand{\onarXiv}[1]{}
\newcommand{\onaamas}[1]{}
\newtheorem{property}{Property}
\newtheorem*{property*}{Property}
\newtheorem{problem}{Problem}
\newtheorem{my_alg}{Algorithm}
\theoremstyle{acmdefinition} 
\newtheorem*{definition*}{Definition}
\DeclareMathOperator{\opt}{OPT}
\DeclareMathOperator{\gen}{SOL}
\DeclareMathOperator{\genint}{SOL_\textit{int}}
\providecommand{\keywords}[1]{\textbf{\textit{Index terms---}} #1}
\renewcommand{\onarXiv}[1]{#1}
\begin{document}
	\title{\vspace{-1.5cm}Multirobot Coverage of Linear Modular Environments} 
	
	\author{
		Salaris, Mirko\\
		\small Politecnico di Milano\\
		\small Milan, Italy\\
		\small mirko.salaris@mail.polimi.it
		\and
		Riva, Alessandro\\
		\small Politecnico di Milano\\
		\small Milan, Italy\\
		\small alessandro.riva@polimi.it
		\and
		Amigoni, Francesco\\
		\small Politecnico di Milano\\
		\small Milan, Italy\\
		\small francesco.amigoni@polimi.it
	}
	
	\date{January 9, 2020\vspace{0.5cm}}
	\maketitle
	
	\begin{abstract}
		Multirobot systems for covering environments are increasingly used in applications like cleaning, industrial inspection, patrolling, and precision agriculture.
The problem of covering a given environment using multiple robots can be naturally formulated and studied as a multi-Traveling Salesperson Problem (mTSP).
In a mTSP, the environment is represented as a graph and the goal is to find tours (starting and ending at the same depot) for the robots in order to visit all the vertices with minimum global cost, which is typically calculated as the makespan, namely the length of the longest tour.
The mTSP is an NP-hard problem for which several approximation algorithms have been proposed. These algorithms usually assume generic environments, but tighter approximation bounds can be reached focusing on specific environments.
In this paper, we address the case of \emph{modular environments}, namely of environments composed of sub-parts, called modules, that can be reached from each other only through some linking structures.
Examples are multi-floor buildings, in which the modules are the floors and the linking structures are the staircases or the elevators, and floors of large hotels or hospitals, in which the modules are the rooms and the linking structures are the corridors.
We focus on \emph{linear} modular environments, with the modules organized sequentially, presenting an efficient (with polynomial worst-case time complexity) algorithm that finds a solution for the mTSP whose cost is within a bounded distance from the cost of the optimal solution.
The main idea of our algorithm is to allocate disjoint ``blocks'' of adjacent modules to the robots, in such a way that each module is covered by only one robot.
We experimentally compare our algorithm against some state-of-the-art algorithms for solving mTSPs in generic environments and show that it is able to provide solutions with lower makespan and spending a computing time several orders of magnitude shorter.
		\vfill
	\end{abstract}

	\keywords{multi-traveling salesperson problem; mTSP; modular environments; multirobot systems}


\section{Introduction}




%

Several applications of autonomous multirobot systems require to perform some form of \emph{coverage}, namely to visit all the locations of given environments. Examples include cleaning~\cite{nikitenko-cleaning}, industrial inspection~\cite{correll-inspection}, patrolling~\cite{portugal-patrolling}, and precision agriculture~\cite{barrientos2011aerial}. The coverage problem has been widely studied in several variants~\cite{choset2001coverageSurvey,galceran2013survey}. One of its most common formulations, called \emph{multi-Traveling Salesperson Problem} (\emph{mTSP})~\cite{bektas2006multiple}, represents the environment with a graph and requires to find tours (starting and ending at a given vertex, called depot) such that, when the robots follow them, all the vertices are visited and the global cost, which is typically the makespan, namely the length of the longest tour, is minimized.
The mTSP is an NP-hard problem for which several approximation algorithms\footnotemark~have been proposed~\cite{Latah2016,nallusamy2010optimization,chandran2006Clustering,OKONJOADIGWE1988159}. The most known is arguably that by Frederickson \cite{frederickson1976approximation}, which provides an approximation factor of $\frac{5}{2} - \frac{1}{m}$, where $m$ is the number of robots. Such an approximation algorithm, similarly to several others, works in generic environments. In principle, tighter approximation bounds can be reached by adding constraints to the environment or focusing on specific classes of environments.
\footnotetext{An \emph{approximation algorithm} is an algorithm with polynomial-time complexity that finds approximate solutions to an NP-hard problem, providing theoretical guarantees on the bound of approximation.}

In this paper, we focus on \emph{modular environments}, namely on environments composed of sub-parts, called modules, that can be reached from each other only through some linking structures.
Examples of modular environments include multi-floor buildings, in which the modules are the floors and the linking structures are the staircases or the elevators, and floors of large hotels or hospitals, in which the modules are the rooms and the linking structures are the corridors.
In this paper, we consider ``linear'' instances of modular environments in which the modules are orderly aligned along a single linking structure, like multi-floor buildings with a single staircase.
We present an efficient (with polynomial worst-case time complexity) algorithm that finds a solution for the mTSP in modular environments whose cost is guaranteed to be at a bounded distance from the cost of the optimal solution. The bound depends on the shape of the modular environment. In particular, in environments in which covering the modules has a cost that is negligible wrt the cost of moving between modules, the approximation factor approaches $3/2$, that is the best known factor for the (single-robot) TSP \cite{christofides1976worst}. 
The main idea behind our algorithm is to allocate disjoint ``blocks'' of adjacent modules to the robots, in such a way that a robot visits and covers all the modules assigned to it (plus the portion of the linking structure needed to reach the modules) and that a module is covered only by a robot.
We experimentally compare our algorithm against some state-of-the-art algorithms for solving mTSPs in generic environments and show that our algorithm is able to provide solutions with lower makespan and spending a computing time several orders of magnitude shorter.

The original contributions of this paper are:
\begin{itemize}
\item the introduction of a new class of environments, called modular environments, that represent several relevant real-world environments (\Cref{sec:problem_formulation}),
\item the study of the mTSP in modular environments and, in particular, the analysis of \emph{integer solutions} that allocate disjoint ``blocks'' of adjacent modules to different robots (\Cref{subsec:int_sol}),
\item the definition of an efficient algorithm that calculates integer solutions for mTSPs in modular environments (\Cref{subsec:opt}),
\item the analysis of the approximation factor obtained by using integer solutions (\Cref{sec:approximation}),
\item the experimental assessment of the proposed algorithm, which shows that it outperforms state-of-the-art algorithms for mTSP in generic environments (\Cref{sec:experimental}).
\end{itemize}

\section{Related Work}
\label{sec:relatedwork}
To the best of our knowledge, the problem of covering modular environments or environments composed of repeated sub-structures has not been directly addressed in the literature. Here, we survey some works that have some relation to our problem.
We attempt to use a common terminology independent of the broad range of applications in which coverage tasks are encountered. In the following, the term `robot' will be used in place of agent, robot, or salesperson and the term `vertex' will be used to mean location, vertex, or city (this terminology is consistent with the rest of the paper, in which we will consider robots in an environment represented as a graph).

An overview of formulations and solutions for the mTSP is presented in~\cite{bektas2006multiple}. The basic definition of mTSP is the following: 
given a set of vertices and $m$ robots located at an initial vertex, called depot, the mTSP consists in finding tours for all the $m$ robots, which all start and end at the depot, such that all the vertices are visited at least once by any robot and the global cost of visiting all vertices is minimized. 
The cost metric can be defined as the total traveled distance or as the time required for completing all the tours. In the area of multirobot systems, we are usually interested in minimizing the total time of execution, namely the makespan.
A mTSP involves two main intertwined issues: how to partition the vertices among the robots and how to compute the optimal paths for the robots. The two main ways in which the mTSP is approached in the literature reflect this double nature of the problem and solve the two above issues in different orders.

An important theoretical result is that every mTSP can be approximately solved through a corresponding TSP formulation \cite{eilon1974distribution, svestka1973computational, bellmore1974transformation}, searching  for an optimal path for the TSP and then splitting it inducing a partition of the vertices on the robots.
The corresponding TSP formulation is obtained by creating $m$ copies of the original depot, each connected to the vertices adjacent to the original depot. The TSP solution obtained on this new graph is forced to have $m$ tours, namely a TSP path is built such that it visits each one of the $m$ copies of the original depot. If we ``cut'' the TSP path every time it visits a copy of the original depot, we obtain $m$ paths, each one starting from a copy of the depot and ending at a copy of the depot. These $m$ paths constitute the (non-optimal, in general) solution for the mTSP.

The other family of (generally non-optimal) approaches first group the vertices into $m$ clusters\footnotemark, so that each cluster represents a set of  vertices that are visited by a single robot whose path can be later optimized as in a TSP~\cite{Latah2016, nallusamy2010optimization, chandran2006Clustering, OKONJOADIGWE1988159}.
\footnotetext{\emph{Clustering} divides a set of objects into groups, or clusters, in such a way that objects belonging to the same cluster are similar to each other (according to some measure) and objects belonging to different clusters are dissimilar (according to the same measure). Proximity and distance measures can be used as similarity measures~\cite{chandran2006Clustering, anderberg1973cluster}.}
Clustering vertices has obvious computational advantages. The size of the search space for a routing problem over $n$ vertices is $\Omega(n!)$. Decomposing the problem into $k$ clusters, each one with approximately $n/k$ vertices, na\"{i}vely reduces the size of the search space to a function of $ k \times (n/k)! $, which is much smaller than $n!$ \cite{chandran2006Clustering}.

The mTSP can be also considered as a relaxation of the VRP (Vehicle Routing Problem), with the capacity constraints removed  \cite{bektas2006multiple}. 
The VRP asks for the optimal set of paths for a number of vehicles that have to deliver goods to a set of costumers, assuming that each vehicle has a limited capacity. The vehicles correspond to the robots of the mTSP and the costumers correspond to the vertices. Therefore, the equivalence between VRP and mTSP is obtained by setting the capacity of each vehicle infinitely large.
This implies that all the solving algorithms for the VRP are also valid for the mTSP, for example, see~\cite{yu2017minimum}.


Since the mTSP generalizes the TSP, it is NP-hard. Given that optimal solutions are likely to be out of reach for instances of realistic size, a great effort has been done for developing approximated and heuristic algorithms \cite{frederickson1976approximation,vandermeulen2019balancedTA,malik2007approximation,Latah2016}. For example, \cite{malik2007approximation} provides an algorithm with a constant approximation factor of $2$ for the Generalized, Multiple Depot, Multiple Traveling Salesman Problem (GMTSP), where the objective is to minimize the sum of the distances traveled by the robots.
With the same objective, \cite{Latah2016} approaches the mTSP through clustering and an Ant Colony Optimization algorithm.
Differently from the above works, in this paper, we are interested in finding solutions to a mTSP that minimize the makespan. For this variant of the mTSP, \cite{vandermeulen2019balancedTA} follows the clustering approach, with progressive improvement of the clusters and final optimization of the tours of the robots. Frederickson \cite{frederickson1976approximation} provided in 1979 a tour-splitting heuristic that yields an approximation factor of $\frac{5}{2} - \frac{1}{m}$ relying on the classical $\frac{3}{2}$-approximation to the TSP by Christofides \cite{christofides1976worst}. 
In the past 40 years, no better theoretical approximation factor has been found for the mTSP in which the objective function is the makespan. Thus research has proceeded restricting the problem to environments with constraints of practical interest. For instance, \cite{averbakh1997p} provides a $(2-2/(m+1))$-approximate algorithm for the makespan mTSP on trees with multiple depots. Our contribution follows a similar direction, considering modular environments.

\section{Problem formulation}
\label{sec:problem_formulation}

As already stated, mTSP is an NP-hard problem and cannot be solved efficiently under the assumption P$\neq$NP.
Therefore, we present an approximated algorithm that runs in polynomial time and provides tight approximation bounds for solving mTSPs in environments with a constrained structure, which the algorithm exploits.

We consider modular environments. A \emph{modular environment} is an environment constituted by sub-parts, the \emph{modules}, which can be repeated multiple times and which are connected to each other through some \emph{linking structures}.
The modules do not need to be all equal, but they need to be clearly identifiable and separable from each other. The idea is that moving from a module to another one could be done only through the linking structures and, thus, each module can be covered rather independently of the other modules. 

There are several examples of real-world environments that are modular in the above way.
For instance, a multi-floor building is a modular environment whose modules are the floors and the linking structures are the staircases (or the elevators).
On a smaller scale, consider floors of large hotels or hospitals. Each floor is itself a modular environment in which modules are the rooms and linking structures are the corridors. 
On a larger scale, modular environments can be identified in urban design and, specifically, in townhouses, where the modules are the houses and the linking structures are the streets.

In this paper, we consider modular environments whose linking structure is ``linear'', i.e., with the modules aligned in an ordered sequence along a single linking structure connecting one module to the next one. Examples of such linear modular environments are multi-floor buildings with a single staircase, floors of large hotels or hospitals whose rooms are connected by a single corridor, and townhouses that can be accessed by a single street.


\begin{figure}[t]
	\centering
	\includegraphics[width=0.5\columnwidth]{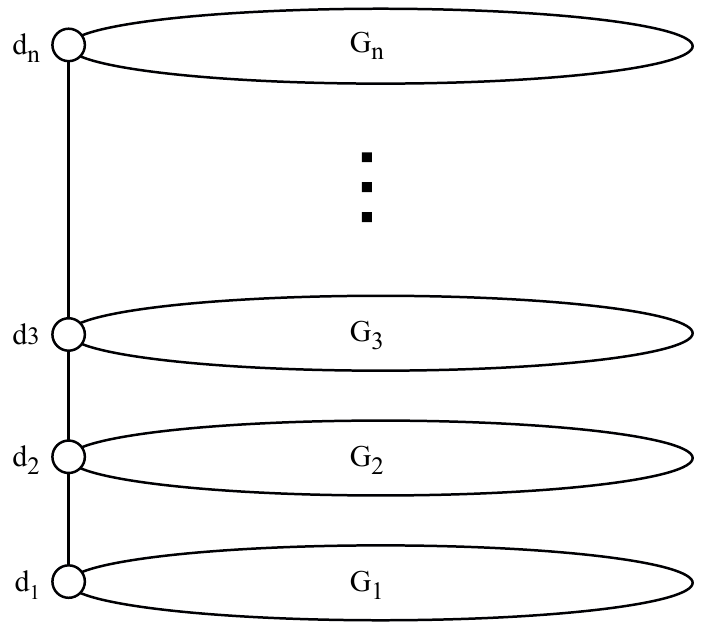}
	\caption{A schematic representation of the linear modular environments we consider. The linking structure (on the left) connects the doorways $d_{i}$, which are the entry points for modules $G_{1}, \ldots, G_{n}$. $d_{1}$ is also the depot.}
	\label{fig:modular_scheme}
	\vspace{-15pt}
\end{figure}

We call \emph{modular mTSP} a mTSP formulated on a linear modular environment.
The environment is thus composed of $n$ disjoint subgraphs $G_{i}=(V_{i},E_{i})$ (with $i=1, \ldots, n$), the modules.
In each module $G_{i}$, we identify a vertex $d_{i}$ that is the ``doorway'' to access the linking structure. Because of the linearity of the modular environment, the modules are orderly aligned and each module $G_{i}$ (except the first and the last ones) has exactly two adjacent modules, $G_{i-1}$ and $G_{i+1}$. The linking structure is represented by a set of $n-1$ edges $(d_{i},d_{i+1})$, with $i = 1, \ldots, n-1$.
An explanatory scheme is shown in \Cref{fig:modular_scheme}.

A metric $t$, representing the traveling time between locations, is defined over any pair of vertices in $V_i \times V_i$ and any pair $d_i, d_{i+1}$. 

We assume to have $m$ homogeneous robots. All robots start and eventually reach a \emph{depot}, which is assumed to be $d_{1}$. There is no constraint on the simultaneous presence of more robots at the same vertex or along the same edge. In applications, conflicts can be solved by using local collision-avoidance mechanisms.

Given all the above, we define our problem as the following optimization problem.


 
\begin{problem}[modular mTSP]\label{prb:general}
	Given a linear modular environment, assign to each robot a tour, starting from and ending to the depot, such that all the vertices of the modules are eventually covered with the minimum makespan.
\end{problem}


Let us here introduce an index to classify linear modular environments according to their shape:
\[\delta = \frac{\max_i t_\textit{tsp}(i)}{\sum_i t(d_i, d_{i+1})},\]
where $t_\textit{tsp}(i)$ is the time (calculated with the metric $t$) needed by a robot for entirely covering the $i$-th module $G_{i}$ when following an optimal tour.
For very ``wide'' instances, namely when the linking structure is short with respect to the size of the modules, $\delta \rightarrow \infty$.
Conversely, for very ``deep'' instances, in which the time for moving along the linking structure dominates the time for covering modules, $\delta \rightarrow 0$.

The results we present in this paper hold for all values of $\delta$. However, we note that, in a sense, instances with a deep structure are more interesting than instances with a wide structure, which can be easily turned into a generic mTSP.
\section{\onarXiv{\hspace{-3pt}}Integer Coverage of Modules}
\label{sec:integral_algorithm}

An intuitive way to solve the above problem is to partition the modules in ``blocks'' and assign them to the robots, in such a way that a robot visits and covers all the adjacent modules assigned to it (plus the portion of the linking structure needed to reach the modules) and that a module is covered only by a robot. We call such solutions \emph{integer solutions} or \emph{solutions in integer form} (formally defined below). Integer solutions are not guaranteed to be optimal as, in general, the optimum may require the robots to move back and forth among non-adjacent modules and cooperate for the coverage of each single module. However, we show that there is always an integer solution whose cost is guaranteed to be within a bound from the cost of the optimal solution.



\subsection{Integer Solutions}
\label{subsec:int_sol}

In the following, we will leverage solutions in integer form to develop approximated results to our modular mTSP.
As we will see, integer solutions turn out to be relatively simple to find and, under some conditions, surprisingly good\footnotemark.

\begin{definition*}[integer solution \emph{or} solution in integer form] \label{def:seqform}
	A solution is in integer form if for each robot $r$ there exist $i,j$ such that:
	\begin{itemize}
		\item for any $i \leq h \leq j$, the $h$-th module is entirely covered by $r$,
		\item $r$ does not take part to the coverage of any other module.
	\end{itemize}
\end{definition*}

\footnotetext{Notice that in integer solutions we can always consider $m \leq n$ even if we do not have this constraint in input, because the solution is trivial whenever $m > n$ . For any modular environment, there are no better integer solutions than the one with $m=n$, assigning one robot to each module and vice-versa.}
\begin{theorem}\label{thm:optint}
	Let $\opt$ be the makespan of an optimal solution for a given modular mTSP instance. Then, there must exist, for the same instance, an integer solution whose makespan $\genint$ satisfies:
	$$1 \leq \frac{\genint}{\opt} \leq 1 + \frac{\delta}{2}.$$
\end{theorem}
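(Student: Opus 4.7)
The lower bound $\opt \leq \genint$ is immediate: any integer solution is a particular feasible solution of the modular mTSP and $\opt$ is by definition the minimum makespan over all feasible solutions.

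For the upper bound, write $L := \sum_{i=1}^{n-1} t(d_i,d_{i+1})$ and $T^* := \max_i t_\textit{tsp}(i)$, so that $\delta = T^*/L$. Since any feasible solution requires some robot to reach the farthest doorway $d_n$ and return, we have the elementary lower bound $\opt \geq 2L$, which in turn gives $T^* = (T^*/2L)\cdot 2L \leq \opt\cdot \delta/2$. Consequently, it suffices to establish the stronger-looking inequality $\genint \leq \opt + T^*$, as this immediately yields $\genint \leq \opt(1 + \delta/2)$.

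The main technical tool is a \emph{coverage lemma}, obtained by a concatenation argument: for every module $G_i$ and every feasible solution, summing the lengths of the walks that the robots perform inside $G_i$ yields at least $t_\textit{tsp}(i)$. Each such walk is a closed walk at the doorway $d_i$; concatenating them produces a single closed walk at $d_i$ visiting every vertex of $G_i$, whose total length---equal to the sum of the individual walks---cannot be smaller than the TSP tour length $t_\textit{tsp}(i)$.

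To build the required integer solution, I would start from an optimal (not-necessarily-integer) solution and, for each robot $r$, let $J_r$ denote the rightmost doorway it visits. After reindexing so that $J_1 \leq J_2 \leq \dots \leq J_m = n$, the natural candidate assigns to robot $r$ the block $(J_{r-1}, J_r]$, with $J_0 := 0$. For the rightmost robot, modules in $(J_{m-1}, n]$ are visited only by robot $m$ in the optimum, so applying the coverage lemma module-by-module gives $C_m^\textit{int} \leq C_m^\textit{opt} \leq \opt$ directly. The main obstacle is bounding the intermediate robots: when the optimum relies on heavy cooperative coverage, the $J_r$-partition may place too much TSP cost in a single intermediate block. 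This is handled by a rebalancing step in which block boundaries are shifted one module at a time---a rightward shift transfers a module whose TSP cost is at most $T^*$ from a heavier block to its right neighbour while strictly decreasing the donating block's cost, and iterating such shifts (whose global room is guaranteed by the bound $\sum_r C_r^\textit{int} \leq m\cdot\opt$, itself a consequence of the coverage lemma) produces a partition in which every block's cost is at most $\opt + T^*$. Verifying the correctness of this rebalancing---in particular, that the shifts terminate with a valid $m$-block partition---is the delicate part of the argument.
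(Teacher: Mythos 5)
Your lower bound, the reduction of the upper bound to $\genint \leq \opt + \max_i t_\textit{tsp}(i)$ via $\opt \geq 2\sum_i t(d_i,d_{i+1})$, and your coverage lemma (the concatenation of the robots' closed walks at $d_i$ inside module $i$ has length at least $t_\textit{tsp}(i)$) are all correct and match the skeleton of the paper's argument; the bound for the rightmost block is also sound. The gap is exactly where you flag it, and it is not a mere technicality: taking the optimal robots' extents $J_r$ as block boundaries cannot in general be repaired by rightward shifts. Consider an instance with a short linking structure (large $\delta$), where an optimal solution can cheaply send \emph{every} robot to the last module so that they cooperatively share the coverage of all modules; then $J_1 = \dots = J_m = n$, your initial partition has blocks $1,\dots,m-1$ empty and block $m$ carrying every module, with cost close to $\sum_i t_\textit{tsp}(i) + 2\sum_i t(d_i,d_{i+1})$, i.e., up to roughly $m$ times $\opt$. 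The overloaded block is the rightmost one, so no rightward shift is available; you would need leftward shifts, but a leftward shift increases the receiving block's linking cost (its extent moves right), and nothing in your argument controls that increase. Moreover, the averaging bound $\sum_r C_r^\textit{int} \leq m\cdot\opt$ cannot by itself force the \emph{maximum} block cost below $\opt + \max_i t_\textit{tsp}(i)$: blocks must stay consecutive, and the rightmost block has an incompressible linking cost of $2\sum_i t(d_i,d_{i+1})$, so ``global room'' says nothing about where the excess can actually be placed.

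The paper's proof avoids this obstacle by never using the extents as boundaries. It allocates modules greedily from the left, giving the $r$-th (sorted) robot a \emph{budget} equal to the time $\sum_i T(s_r^*, i)$ that its optimal counterpart spends inside modules, capped at extent $\sigma(s_r^*)$: the cap guarantees that the integer robot's linking cost never exceeds that of the optimal robot, the budget plus the completion of the last touched module bounds its in-module cost by $\sum_i T(s_r^*, i) + \max_i t_\textit{tsp}(i)$, and a suffix form of your coverage lemma shows that the budgets suffice to cover all $n$ modules. Each block's cost is then at most $\opt + \max_i t_\textit{tsp}(i)$ by construction, with no rebalancing needed. To salvage your outline, replace the extent-based partition and the shifting procedure with this budget-based greedy allocation; that is the missing idea.
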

\begin{proof}
	The left-hand inequality clearly holds true, as any solution cannot be better than an optimal one.
	
	To prove the right-hand inequality, let us explicit the solutions as sequences of robot tours, i.e., $\opt = (s_1^*, \dots, s_m^*)$ and $\genint = (s_1, \dots, s_m)$.
	Let $\sigma(s)$ be the highest module index covered (partially or not) in a tour $s$. Without loss of generality, we assume that robots are ordered and, for any $r < r'$, it holds $\sigma(s_r) \leq \sigma(s_{r'})$ and $\sigma(s_r^*) \leq \sigma(s_{r'}^*)$.
	
	Let $T(s, i)$ be the time spent in tour $s$ for covering (partially or not) the $i$-th module.
	We construct $\genint$ from $\opt$ as follows.
	Define $s_1$ as the tour of robot $1$ that, starting from $d_1$, covers in a sequence all the modules (spending $t_{\textit{tsp}}(\cdot)$ for each module, i.e., the least time for a single-robot module coverage) until either the traveling time within modules reaches $\sum_{i} T(s^*_1, i)$ or module $\sigma(s_1^*)$ is covered.
	In order to preserve the integer form, the robot entirely covers the last-reached module (if any) before coming back to $d_1$. Accordingly, we define $s_r$, with $r > 1$, resuming the covering process from module $\sigma(s_{r-1})+1$.
	Leaving aside the time for moving between modules, the tour of a robot $r$ lasts $\sum_{i} T(s_r^*, i)$ plus the time needed to complete the coverage of the last module, or less if the module $\sigma(s_r^*)$ is reached (and covered) ahead of time.
	
We now give evidence that the so-obtained sequence of tours $\genint = (s_1, \dots, s_m)$ covers all the modules, i.e., $\sigma(s_m) = n$. Let $\bar{r} < m$ be the highest value such that $\sigma(s_{\bar{r}}) = \sigma(s_{\bar{r}}^*)$. By construction, in $\genint$ all the modules up to the $\sigma(s_{\bar{r}})$-th one have been covered and no robot $r > \bar{r}$ stops its covering ahead of time. Also, by construction of $\genint$ and by definition of $\sigma(\cdot)$, no robot $r > \bar{r}$ covers modules below the $\sigma(\bar{r})$-th one, and thus:
\begin{align*}
	\sum_{\substack{r > \bar{r} \\ i > \sigma(s_{\bar{r}})}} T(s_r, i)
	\geq
	\sum_{\substack{r > \bar{r} \\ \textit{any}~i}} T(s_r^*, i).
\end{align*}
Since the global time (i.e., the sum of all the robots' traveling times) needed to cover each module cannot be lower than $t_\textit{tsp}(i)$ for any $i$, we have:
\begin{align*}
	\sum_{\substack{r > \bar{r} \\ \textit{any}~i}} T(s_r^*, i)
	\geq
	\sum_{i > \sigma(s_{\bar{r}})} t_\textit{tsp}(i).
\end{align*}

Finally, given that in any integer solution robots cover \emph{entire distinct} modules, and in $\genint$ the coverage of each module takes $t_\textit{tsp}(i)$, all the modules $i > \sigma(s_{\bar{r}})$ must be covered as well.
	
In the solution outlined above, for any $r$ it holds $\sigma(s_r) \leq \sigma(s_r^*)$, that is, the time spent in $s_r$ for moving along the linking structure is not larger than the corresponding time spent in $s_r^*$.
	Furthermore, the time spent covering modules in any $s_r$ is not larger than $\sum_i T(s_r^*, i) + \max_{i} t_\textit{tsp}(i)$.
	Consequently:

\begin{align}
	\genint \leq \opt + \max_{1 \leq i \leq n} t_\textit{tsp}(i). \label{eq:optint1}
\end{align}
	
	Since any solution has to necessarily reach the last module and come back, we have a lower bound to the value of the optimum, namely $\opt \geq 2\sum_i t(d_i, d_{i+1})$.
	Making use of this last inequality in~(\ref{eq:optint1}), the claim of the theorem follows.
\end{proof}

Not only \Cref{thm:optint} highlights the existence of a relation between integer solutions and optimal ones, but it also quantifies this relation in terms of $\delta$, namely in terms of shape of the linear modular environment of the problem instance.
In particular, for environments that are deep rather than wide, the ratio between the cost of the optimal solution and that of its integer counterpart converges to $1$.

\subsection{Optimal Algorithm for Integer Solutions}
\label{subsec:opt}

In this section, we prove the following theorem.
\begin{theorem}\label{thm:main}
	If the optimal time $t_\textit{tsp}(i)$ for covering each module is given, then there exists an algorithm for finding a best integer solution in $\mathcal{O}(n^2 \log n \log m)$, i.e., in polynomial time with respect to the input problem size.
\end{theorem}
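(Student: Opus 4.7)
The plan is to cast the search for a best integer solution as a min-max contiguous partitioning problem on the ordered modules and solve it by binary searching on the target makespan paired with a greedy feasibility test.

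First, I would fix notation. An integer solution is determined by cut points $0 = j_0 < j_1 < \cdots < j_k = n$ with $k \leq m$: the $r$-th active robot covers the modules $j_{r-1}+1, \ldots, j_r$. After precomputing the prefix sums $D(\ell) = \sum_{h=1}^{\ell} t(d_h, d_{h+1})$ and $S(\ell) = \sum_{h=1}^{\ell} t_\textit{tsp}(h)$ in $O(n)$ time---using the hypothesis that each $t_\textit{tsp}(i)$ is given---the tour length of a robot assigned the block $[i,j]$ closes in the form $c(i,j) = 2\,D(j-1) + S(j) - S(i-1)$, which is queryable in $O(1)$. The makespan of an integer solution is then $\max_r c(j_{r-1}+1, j_r)$. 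The key structural property is monotonicity: for fixed $i$, $c(i,j)$ is non-decreasing in $j$, and for fixed $j$ it is non-increasing in $i$.

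Second, I would turn this monotonicity into a feasibility test. Given a candidate makespan value $M$, start with $j_0 = 0$ and, for $r = 1, 2, \ldots$, binary-search the largest $j_r$ with $c(j_{r-1}+1, j_r) \leq M$; the procedure succeeds if some iteration reaches $j_r = n$ within at most $m$ steps, and fails otherwise. Each inner binary search on $j_r$ runs in $O(\log n)$, so a single feasibility test costs $O(m \log n)$. Correctness is a routine exchange argument: any partition witnessing feasibility of $M$ can be transformed, block by block, into the greedy one without increasing the number of blocks.

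Third, to pinpoint the smallest feasible $M$, I would restrict the outer binary search to the values that $c(i,j)$ can actually attain; there are at most $O(n^2)$ of them. After sorting them once in $O(n^2 \log n)$, binary search over the sorted list performs $O(\log n)$ feasibility tests. Combining the sort with $O(\log n)$ iterations of the $O(m \log n)$ test yields total cost $O(n^2 \log n + m \log^2 n)$, which, using the earlier footnote assumption that one may take $m \leq n$ in integer solutions, is absorbed into the claimed $\mathcal{O}(n^2 \log n \log m)$ bound. The returned integer solution is optimal because the optimum is exactly the smallest value in the candidate set for which the test succeeds.

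The main obstacle I anticipate is the complexity accounting rather than correctness: a naive dynamic program over $(r, j)$ states would cost $\Theta(m n^2)$, and it takes some care to show that the parametric-search approach above really beats it---in particular whether the $O(n^2)$ candidate block costs are sorted explicitly or only located implicitly through a selection-based binary search. A secondary edge case is when the greedy exhausts the $n$ modules using fewer than $m$ blocks: the remaining robots simply stay at $d_1$ with a zero-cost tour and therefore do not affect the makespan.
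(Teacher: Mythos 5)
Your proposal is correct, but it solves the problem by a genuinely different method than the paper. The paper works with the function $f(i,j,k)$ (the optimal makespan of an integer solution for $k$ robots covering modules $i$ through $j$) and computes it by a divide-and-conquer dynamic program over \emph{split points}, recursively halving the team of robots; its stated bound comes from two observations: only $\mathcal{O}(\log m)$ values of $k$ ever appear in the halving recursion (whose width stays bounded by $2$), and each split point can be located by binary search thanks to the monotonicity of the two halves (\Cref{lmm:minimum}). You instead flatten the problem to min-max contiguous partitioning of $[1,n]$ into at most $m$ blocks with block cost $c(i,j)=f(i,j,1)$ --- exactly the paper's \Cref{prp:case2} --- and then binary search on the optimal makespan over the $\mathcal{O}(n^2)$ attainable block costs, certifying each candidate value with a greedy feasibility test justified by an exchange argument. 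Both arguments ultimately rest on the same two facts (the closed form for one-robot block costs and the monotonicity of those costs in the endpoints), but your parametric-search route is more elementary, is the standard treatment of min-max linear partitioning, and in fact gives a slightly sharper bound: $\mathcal{O}(n^2\log n + m\log^2 n)$, which is $\mathcal{O}(n^2\log n)$ once $m\le n$ is assumed (as the paper's footnote permits), versus the paper's $\mathcal{O}(n^2 \log n \log m)$. Your two flagged concerns are non-issues for the theorem as stated: explicitly sorting the $\mathcal{O}(n^2)$ candidates already fits within the claimed bound (a selection-based search exploiting the fact that $c(i,j)=2D(j-1)+S(j)-S(i-1)$ is monotone in each index separately, hence forms a row- and column-sorted matrix, would only improve this further), and idle robots with empty blocks are harmless since the makespan is a maximum over tours. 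The one thing the paper's formulation buys that yours does not is the full table of optimal values $f(i,j,k)$ for sub-intervals and the relevant team sizes, but nothing downstream (in particular the approximation analysis of \Cref{sec:approximation}) uses more than the single optimal partition your method outputs.
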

\noindent Before going through the proof we need some preliminary results. 
First of all, let us define $f(i, j, k)$ as the makespan of an optimal integer solution for $k$ robots that cover modules from $i$ to $j \geq i$, starting from $d_1$.
The computation of such a time is particularly simple in some special cases.
In particular, we highlight the following two cases.
\begin{property}\label{prp:case1}
	If only one module $i$ has to be covered, regardless the number of robots $k$ employed, it holds: $$f(i,i,k) = t_\textit{tsp}(i) + 2\sum_{h = 1}^{i-1} t(d_h, d_{h+1}).$$
\end{property}
\begin{property}\label{prp:case2}
	For any interval of modules to be covered $[i, j]$, if only one robot is employed, it holds: $$f(i,j,1) = \sum_{h = i}^j t_\textit{tsp}(h) + 2\sum_{h = 1}^{j-1} t(d_h, d_{h+1}).$$
\end{property} 
Computing generic values of function $f(i, j, k)$ requires some more effort.
Such values play a central role in the development of our polynomial algorithm, so we show a way to compute them quite quickly.

To this aim, we introduce the concept of \emph{split point} as the module at which a team of robots splits into two halves.
Intuitively, if, in the best integer solution, $k$ robots have to cover modules from $i$ to $j$, there must exist a value $i \leq h \leq j$ such that about half of the robots cover modules below the $h$-th one and about half of the robots cover modules above the $h$-th one.

\begin{definition*}[split point]
	Given $k$ robots and an (integer) interval $[i,j]$ of modules to cover, a split point $h \in \mathbb{N}$ is a solution of:
	$$
	f(i, j, k) = \min_{i \leq h \leq j} \max
	\begin{cases}
	f(i, h, \lfloor k/2 \rfloor)\\
	f(h+1, j, \lceil k/2 \rceil)
	\end{cases}.
	$$
\end{definition*}

Notice that, if for any $i \leq h \leq j$, the values $f(i, h, \lfloor k/2 \rfloor)$ and $f(h, j, \lceil k/2 \rceil)$ are known, a split point for $k$ robots covering modules from $i$ to $j$ can be found in $\mathcal{O}(j-i)$ by means of a linear inspection.

Once the split point is obtained, the value of $f(i, j, k)$ follows.
With this in mind, consider the algorithm sketched below.

\begin{my_alg}\label{alg:main}
Given a modular mTSP instance and the value of $t_\textit{tsp}(i)$ for each module $i$ of the instance:
	\begin{enumerate}[label=({\alph*})]
		\item Compute $f(i,j,k)$ for the cases $i = j$ and $k = 1$.\label{alg-in:one}
		\item Set $k = 2$ robots.\label{alg-in:two}
		\item For any $1 \leq i \leq j \leq n$ compute $f(i,j,k)$ and store the corresponding split points.\label{alg-in:three}
		\item Increment $k$ and repeat from \ref{alg-in:three} while $k \leq \lceil m/2 \rceil$.\label{alg-in:four}
		\item Compute the split point for $m$ robots visiting modules from $1$ to $n$.\label{alg-in:five}
    \item For each of the resulting halves, list recursively all the split points.\label{alg-in:six}
	\end{enumerate}
\end{my_alg}
A straightforward complexity of \Cref{alg:main} is $\mathcal{O}(mn^3)$.
In particular, the values of the function considered in~\ref{alg-in:one} can be filled as discussed at the beginning of this section in $\mathcal{O}(mn + n^2)$.
Each iteration of the loop~\ref{alg-in:two}-\ref{alg-in:four} has complexity linear in the number of robots and quadratic in the number of floors.
Also, since at a given iteration $k$ all the values of $f(i', j', k')$, with $k'<k$ and $i \leq i' \leq j' \leq j$, are known, the value of $f(i, j, k)$ and the corresponding split point can be computed in $\mathcal{O}(n)$.
The resulting complexity for~\ref{alg-in:two}-\ref{alg-in:four} is $\mathcal{O}(mn^3)$.
In~\ref{alg-in:five} the split point computation is $\mathcal{O}(n)$ as above. 
Finally, in~\ref{alg-in:six}, each recursive step takes a constant amount of time, since all the split points have been previously stored.
The number of recursive steps is $\mathcal{O}(2^{\log_2 m}) = \mathcal{O}(m)$ and the stated $\mathcal{O}(mn^3)$ complexity follows.

This computing time is slightly worse than that claimed by Theorem~\ref{thm:main}. However, we can start noticing there is no need for evaluating $f(i,j,k)$ in any $i,j,k$, as in the split point definition the number of robots is always divided by $2$. One can therefore pre-compute the set of needed values by means of a recursive procedure. The whole recursive width-expansion is bounded by $2$, as it holds:
\begin{align*}
	\left\lfloor \frac{\lceil k/2 \rceil}{2} \right\rfloor
	=
	\left\lceil \frac{\lfloor k/2 \rfloor}{2} \right\rceil.
\end{align*}
Since the depth of the recursion is bounded by $\mathcal{O}(\log m)$ (versus the $\mathcal{O}(m)$ complexity of the~\ref{alg-in:four} looping) the complexity of the algorithm can be lowered to $\mathcal{O}(n^3 \log m)$.

The computation of the values $f(i,j,k)$ and the corresponding split points can be sped up, as well. To this purpose, we need to point out that the two arguments of the $\max$ operator in the split point equation are, respectively, a monotonically non-decreasing and a monotonically non-increasing function of $h$. Consequently, their maximum value is minimized when the two functions are relatively close to each other. Before formalizing this concept, let us introduce a notation shortcut to ease the presentation:
\begin{align*}
l_{i,j}^k(h) &= f(i, h, \lfloor k/2 \rfloor),\\
u_{i,j}^k(h) &= f(h+1, j, \lceil k/2 \rceil).
\end{align*}
We can now formulate a sufficient condition that allows us to restrict the region of interest when searching for a split point.
\begin{lemma}\label{lmm:minimum}
	If $l_{i,j}^k(\hat{h}) \leq u_{i,j}^k(\hat{h})$ and $l_{i,j}^k(\hat{h}+1) \geq u_{i,j}^k(\hat{h}+1)$, then at least one between $\hat{h}$ and $\hat{h}+1$ is a split point.
\end{lemma}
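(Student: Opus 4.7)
The plan is to leverage the monotonicity of $l_{i,j}^k$ and $u_{i,j}^k$ as functions of $h$, as already noted in the paragraph preceding the statement: $l_{i,j}^k$ is non-decreasing (more modules to cover with the same number of robots) and $u_{i,j}^k$ is non-increasing (fewer modules to cover). Once these are in hand, the lemma follows by a short case analysis splitting the candidate indices into $h \leq \hat{h}$ and $h \geq \hat{h}+1$.

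First I would formally justify the two monotonicities. Given any integer solution covering modules $[i,h{+}1]$ with $\lfloor k/2 \rfloor$ robots, removing module $h{+}1$ from the block owned by whichever robot covers it yields a valid integer solution on $[i,h]$ with the same robots and no larger makespan; hence $f(i,h,\lfloor k/2 \rfloor) \leq f(i,h{+}1,\lfloor k/2 \rfloor)$, which is the non-decreasing property of $l_{i,j}^k$. The argument for $u_{i,j}^k$ is symmetric, removing the lowest module from the block of the robot that covers it. Both use only the fact that every integer solution is a disjoint union of contiguous blocks plus the mandatory return to $d_1$.

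Next I would partition $h \in \{i, \dots, j\}$ into two regions. For $h \leq \hat{h}$, the two monotonicities give $u_{i,j}^k(h) \geq u_{i,j}^k(\hat{h}) \geq l_{i,j}^k(\hat{h}) \geq l_{i,j}^k(h)$, so
\[
\max\bigl(l_{i,j}^k(h),\, u_{i,j}^k(h)\bigr) = u_{i,j}^k(h) \geq u_{i,j}^k(\hat{h}) = \max\bigl(l_{i,j}^k(\hat{h}),\, u_{i,j}^k(\hat{h})\bigr),
\]
so no such $h$ improves on $\hat{h}$. Symmetrically, for $h \geq \hat{h}{+}1$, we have $l_{i,j}^k(h) \geq l_{i,j}^k(\hat{h}{+}1) \geq u_{i,j}^k(\hat{h}{+}1) \geq u_{i,j}^k(h)$, so $\max(l_{i,j}^k(h), u_{i,j}^k(h)) = l_{i,j}^k(h) \geq l_{i,j}^k(\hat{h}{+}1) = \max(l_{i,j}^k(\hat{h}{+}1), u_{i,j}^k(\hat{h}{+}1))$, and no such $h$ improves on $\hat{h}{+}1$. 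The two regions cover all admissible indices, so the minimum defining the split point is attained at $\hat{h}$ or $\hat{h}{+}1$; at least one of the two is therefore a split point.

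The main obstacle, though modest, is the careful formalization of the monotonicity claims. One must verify that removing a single module from the extremal block of an integer solution preserves the integer structure and does not increase the makespan, taking care of the degenerate case in which the shrunk block becomes empty (the corresponding robot simply stays at the depot) and of the linking-structure travel that is saved. Once this bookkeeping is done, the rest of the argument is just chasing the two monotonicity inequalities.
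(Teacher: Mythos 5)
Your proof is correct and takes essentially the same approach as the paper's: the two-region case analysis ($h \leq \hat{h}$ versus $h \geq \hat{h}+1$), chaining the monotonicity of $l_{i,j}^k(\cdot)$ and $u_{i,j}^k(\cdot)$ with the two hypotheses, is exactly the published argument. The only addition is that you also justify the monotonicity claims themselves (which the paper merely asserts in the paragraph preceding the lemma), and your block-shrinking argument for that, including the empty-block degenerate case, is sound.
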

\begin{proof}
	The statement follows immediately from the fact that $l_{i,j}^k(\cdot)$ ($u_{i,j}^k(\cdot)$) is a monotonically non-decreasing (non-increasing) function.
	Indeed, for any $h' \leq \hat{h}$ it holds $u_{i,j}^k(h') \geq u_{i,j}^k(\hat{h}) \geq l_{i,j}^k(\hat{h})$. Similarly, for any $h'' \geq \hat{h}+1$ we have $l_{i,j}^k(h'') \geq l_{i,j}^k(\hat{h}+1) \geq u_{i,j}^k(\hat{h}+1)$.
	Thus, in $\hat{h}$ and $\hat{h}+1$, the maximum between $l_{i,j}^k(\cdot)$ and $u_{i,j}^k(\cdot)$ is lower than or equal to their maximum computed everywhere else.
\end{proof}
This result allows us to reduce the search for an optimal split point to the search of a particular condition, namely, either when $l_{i,j}^k(\cdot)$ and $u_{i,j}^k(\cdot)$ are equal or when the latter exceeds the former.
Thanks to the monotonicity of the two functions, a binary search can be employed over the interval $[i,j]$ of integer values, improving the complexity of searching for a split point from $\mathcal{O}(j-i)$ to $\mathcal{O}(\log_2 (j-i))$.

\begin{proof}[Proof of \Cref{thm:main}]
	Since the intervals of modules that each robot has to cover can be easily computed while descending in the split point recursion of \Cref{alg:main}, an optimal integer solution to \Cref{prb:general} can be computed in $\mathcal{O}(n^2 \log n \log m)$.
\end{proof}

\section{Approximation}
\label{sec:approximation}

The results of the previous section hold when the optimal times $t_{\textit{tsp}}(i)$ for covering each module are known. If the time needed for covering a module is computed by means of a suboptimal algorithm, the values calculated in Properties \ref{prp:case1} and \ref{prp:case2} of \Cref{subsec:opt} could be much worse, harming the optimality of the integer solution calculated by \Cref{alg:main}.
However, if the TSP algorithm employed to calculate the tours covering each module has a bounded approximation factor, the approximation factor of an integer solution found by our approach is bounded as well.

\begin{theorem}
	If there exists an $\alpha$-approximation algorithm for the TSP, then there exists a polynomial-time algorithm for the modular mTSP, whose approximation factor is $\alpha \left( 1 + \frac{\delta}{2} \right)$.
\end{theorem}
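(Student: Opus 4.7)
The plan is to bootstrap from \Cref{thm:main} and \Cref{thm:optint} by substituting approximate per-module tour times into \Cref{alg:main}. Concretely, I would first invoke the given $\alpha$-approximation for the TSP once per module to obtain values $\tilde{t}_\textit{tsp}(i)$ satisfying $t_\textit{tsp}(i) \leq \tilde{t}_\textit{tsp}(i) \leq \alpha\, t_\textit{tsp}(i)$. Then I would feed these values into \Cref{alg:main}, and have each robot traverse its assigned modules using the corresponding $\alpha$-approximate tours returned by the TSP routine. The overall procedure is clearly polynomial, since both the $n$ TSP calls and \Cref{alg:main} itself are polynomial.

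The core of the argument is to compare the makespan of the integer solution produced in this way, call it $\gen$, with the makespan $\genint$ of the best integer solution defined with respect to the true values $t_\textit{tsp}(i)$. The key observation is that, in any integer solution, the time spent by each robot decomposes cleanly into a module-cover contribution and a linking-structure contribution. Replacing the optimal module tours with their $\alpha$-approximate counterparts scales the former by at most $\alpha$ and leaves the latter unchanged; since $\alpha \geq 1$, each robot's total tour time grows by a factor of at most $\alpha$, and hence so does the makespan. Applied to the optimal integer partition, this shows that its makespan, when measured with approximate tour times, is at most $\alpha \cdot \genint$.

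Now, \Cref{alg:main} fed with the inputs $\tilde{t}_\textit{tsp}(\cdot)$ returns by construction the integer partition minimizing the makespan computed with these approximate times. Therefore its output cannot be worse than the approximate-time makespan of any particular integer partition, including the optimal one with respect to the true times. Combining this observation with the previous paragraph yields $\gen \leq \alpha \cdot \genint$, and chaining with \Cref{thm:optint}, which states $\genint \leq (1 + \delta/2)\,\opt$, gives $\gen \leq \alpha\,(1 + \delta/2)\,\opt$, as claimed.

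The main obstacle I anticipate is precisely the substitution step: one must confirm that the robot realizing the new makespan is still bounded by $\alpha$ times the old makespan, rather than by some worse combination of module and linking contributions. As sketched above, the bound follows immediately from the non-negativity of both contributions together with $\alpha \geq 1$, but it is the one place where the peculiar structure of the modular mTSP --- a clean separation between ``internal'' module costs and ``external'' linking costs --- enters the argument, and it must be applied uniformly over all robots so that the maximum really does scale by $\alpha$.
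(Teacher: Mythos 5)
Your proposal is correct and takes essentially the same route as the paper: run \Cref{alg:main} with $\alpha$-approximate per-module tour times, establish $\gen \leq \alpha \, \genint$, and chain this with \Cref{thm:optint} to get the factor $\alpha\left(1 + \frac{\delta}{2}\right)$. If anything, your justification of the key inequality $\gen \leq \alpha\,\genint$ --- separating module-cover costs from linking costs and invoking the algorithm's exact optimality with respect to whatever per-module times it is fed --- spells out details that the paper's proof only asserts in one line.
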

\begin{proof}
	Let $\gen^{*}_\textit{int}$ and $\gen$ be the solutions found by \Cref{alg:main} leveraging, respectively, an optimal and an $\alpha$-approximation TSP algorithm when calculating the times for covering each module.
	Since, in the latter case, the computation of the split points makes use of suboptimal times, the resulting tours are up to $\alpha$ times worse than in the former case.
	Indeed, once the tour traveling times have been computed during step~\ref{alg-in:one} of \Cref{alg:main}, no other approximated information is introduced, and the final outcome is not worsened anymore.
	By making use of the relation $\gen \leq \alpha \gen^{*}_\textit{int}$ in \Cref{thm:optint}, we achieve the claimed approximation factor.
\end{proof}
From the Christofides' algorithm~\cite{christofides1976worst} for the TSP, we have the following direct consequence.
\begin{corollary}
	\label{T:bound-approx}
	There exists a $\frac{3}{2} \left( 1 + \frac{\delta}{2} \right)$-approximation algorithm for the modular mTSP.
\end{corollary}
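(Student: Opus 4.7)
The plan is essentially to invoke the preceding theorem as a black box and instantiate its parameter. The preceding theorem states that whenever an $\alpha$-approximation algorithm for the (single-robot) TSP is available, \Cref{alg:main} can be turned into a polynomial-time algorithm for the modular mTSP achieving approximation factor $\alpha\bigl(1+\tfrac{\delta}{2}\bigr)$. So the corollary is a one-line consequence: I simply plug a specific choice of $\alpha$ into that bound.

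The only ingredient I need from the literature is Christofides' classical result \cite{christofides1976worst}, which guarantees a polynomial-time $\tfrac{3}{2}$-approximation algorithm for the metric TSP. Each module $G_i$ together with the metric $t$ restricted to $V_i\times V_i$ forms a metric TSP instance, so Christofides' algorithm can be run inside step~\ref{alg-in:one} of \Cref{alg:main} to compute the single-module coverage times $t_\textit{tsp}(i)$ up to a factor $\tfrac{3}{2}$ of the optimal ones.

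Applying the preceding theorem with $\alpha=\tfrac{3}{2}$ then yields a polynomial-time algorithm for the modular mTSP whose approximation factor is
\[
\tfrac{3}{2}\bigl(1+\tfrac{\delta}{2}\bigr),
\]
which is exactly the claimed bound. There is no real obstacle here: the work has already been done in \Cref{thm:optint} (the structural bound on integer solutions) and in the preceding theorem (propagation of the TSP approximation factor through \Cref{alg:main}); the corollary is just the substitution $\alpha = 3/2$.
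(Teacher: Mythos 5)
Your proposal is correct and matches the paper's own proof exactly: the paper derives the corollary as a direct consequence of the preceding theorem by instantiating $\alpha = \tfrac{3}{2}$ via Christofides' algorithm \cite{christofides1976worst}. Your added remark that each module with the metric $t$ restricted to $V_i \times V_i$ is a valid metric TSP instance is a fine (implicit in the paper) justification, but it does not change the route.
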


It is interesting to compare this bound with that of Frederickson \cite{frederickson1976approximation}, which is $\frac{5}{2} - \frac{1}{m}$ and holds for mTSPs in any environment. For any instance of the modular mTSP in which $\delta < 1 - 1/m$, the bound of \Cref{T:bound-approx} is lower than that of Frederickson. In particular, when $\delta \rightarrow 0$, our bound approaches $\frac{3}{2}$ that represents the best-known approximation factor for the (single-robot) TSP.
\section{Experimental setup and results}

\begin{figure*}[t!]
	\captionsetup[subfigure]{labelformat=empty}
	\centering
	\begin{subfigure}[t]{0.3\textwidth}
		\centering
		\includegraphics[width=\columnwidth]{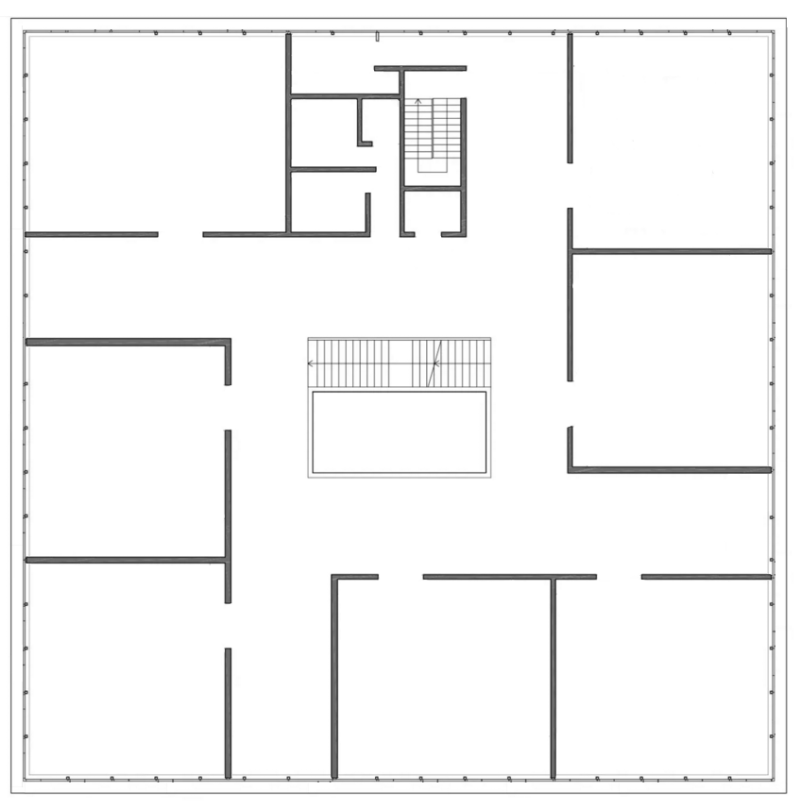}
		\caption{Module \littleModule}
		\vspace{10pt}
	\end{subfigure}
	\hfill
	\begin{subfigure}[t]{0.3\textwidth}
		\centering
		\includegraphics[width=\columnwidth]{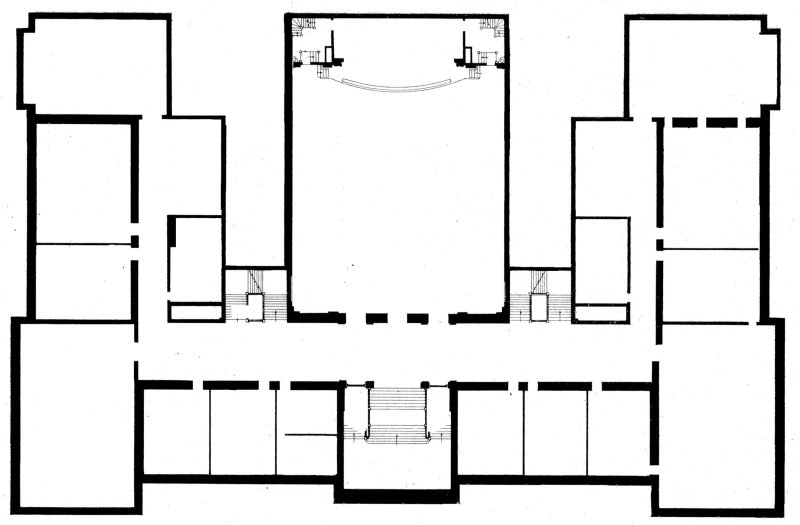}
		\caption{Module \middleModule}
		\vspace{10pt}
	\end{subfigure}
	\hfill
	\begin{subfigure}[t]{0.3\textwidth}
		\centering
		\includegraphics[width=\columnwidth]{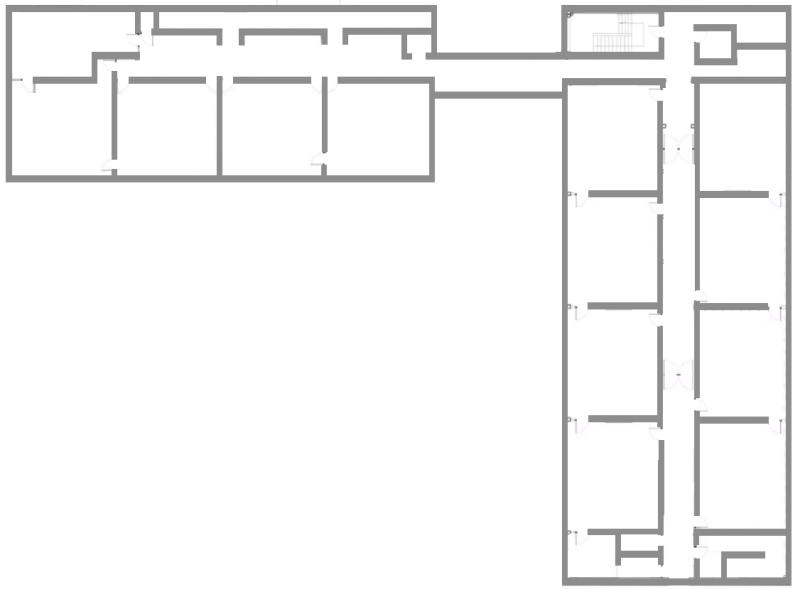}
		\caption{Module \bigModule}
	\end{subfigure}
	\vspace{-15pt}
	
	\caption{The floor plans of the three base modules used in experiments.
		Module \littleModule\ has a circular topology, the graph has $40$ vertices, and the approximated solution of the corresponding TSP is $198$ m long.
		Module \middleModule\ has a star topology, the graph has $47$ vertices, and the approximated solution of the corresponding TSP is $347$ m long.
		Module \bigModule\ has a linear topology, the graph has $80$ vertices, and the approximated solution of the corresponding TSP is $438$ m long.}
	\label{fig:modules}
\end{figure*}

\begin{figure}
	\centering
	\includegraphics[width=0.7\columnwidth]{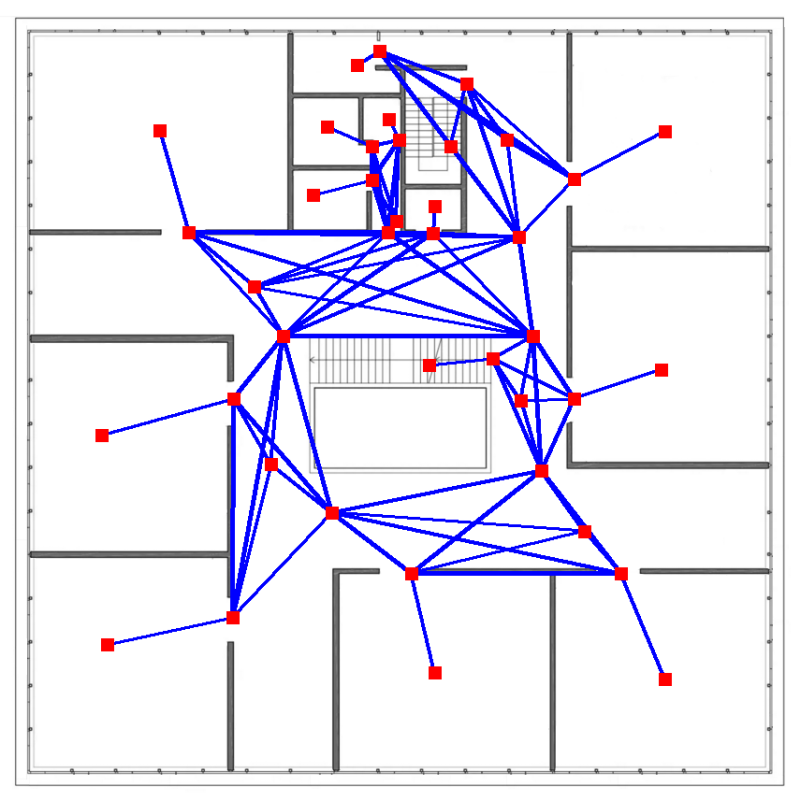}
	\caption{The floor plan of module \littleModule\ overlaid with the graph extracted from it. Vertices are in red and edges in blue.}
	\label{fig:graph_overlaid}
	\vspace{-10pt}
\end{figure}

\begin{figure*}[t]
	\begin{subfigure}[t]{0.3\textwidth}
		\centering
		\includegraphics[width=\columnwidth]{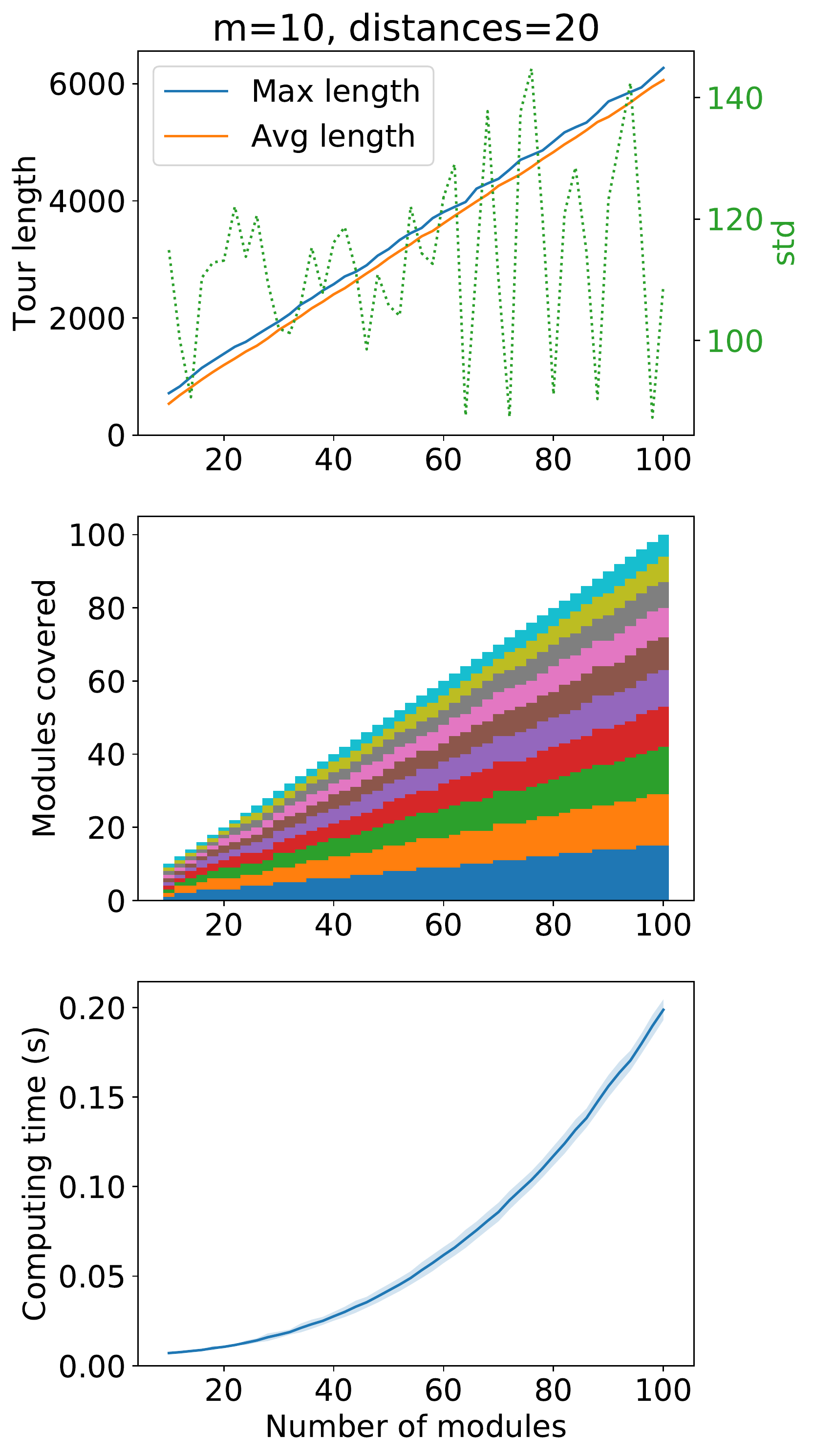}
		\caption{}
		\label{fig:hotel_varying_N}
		\vspace{10pt}
	\end{subfigure}
	\hfill
	\begin{subfigure}[t]{0.3\textwidth}
		\centering
		\includegraphics[width=\columnwidth]{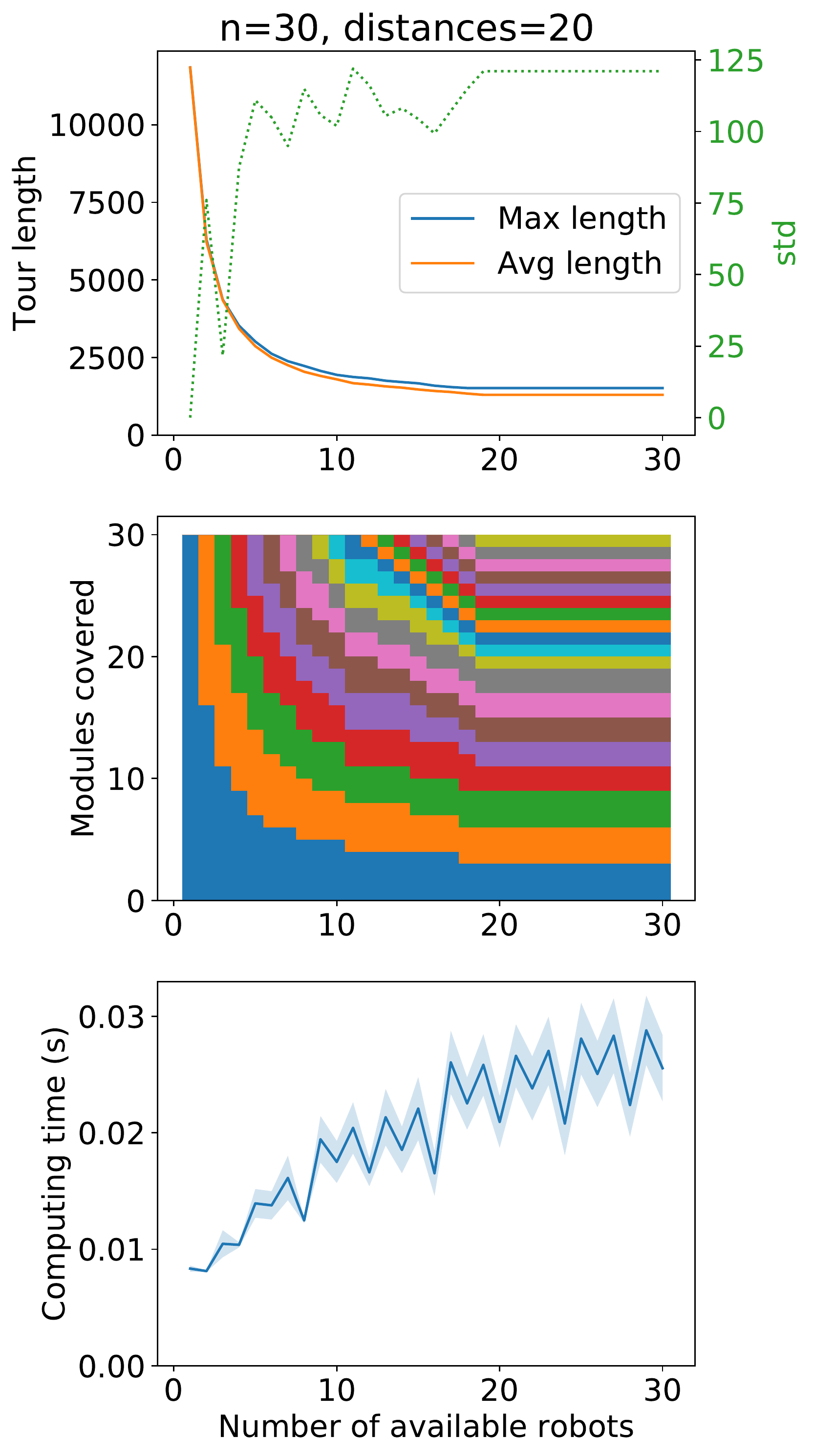}
		\caption{}
		\label{fig:hotel_varying_m}
		\vspace{10pt}
	\end{subfigure}
	\hfill
	\begin{subfigure}[t]{0.3\textwidth}
		\centering
		\includegraphics[width=\columnwidth]{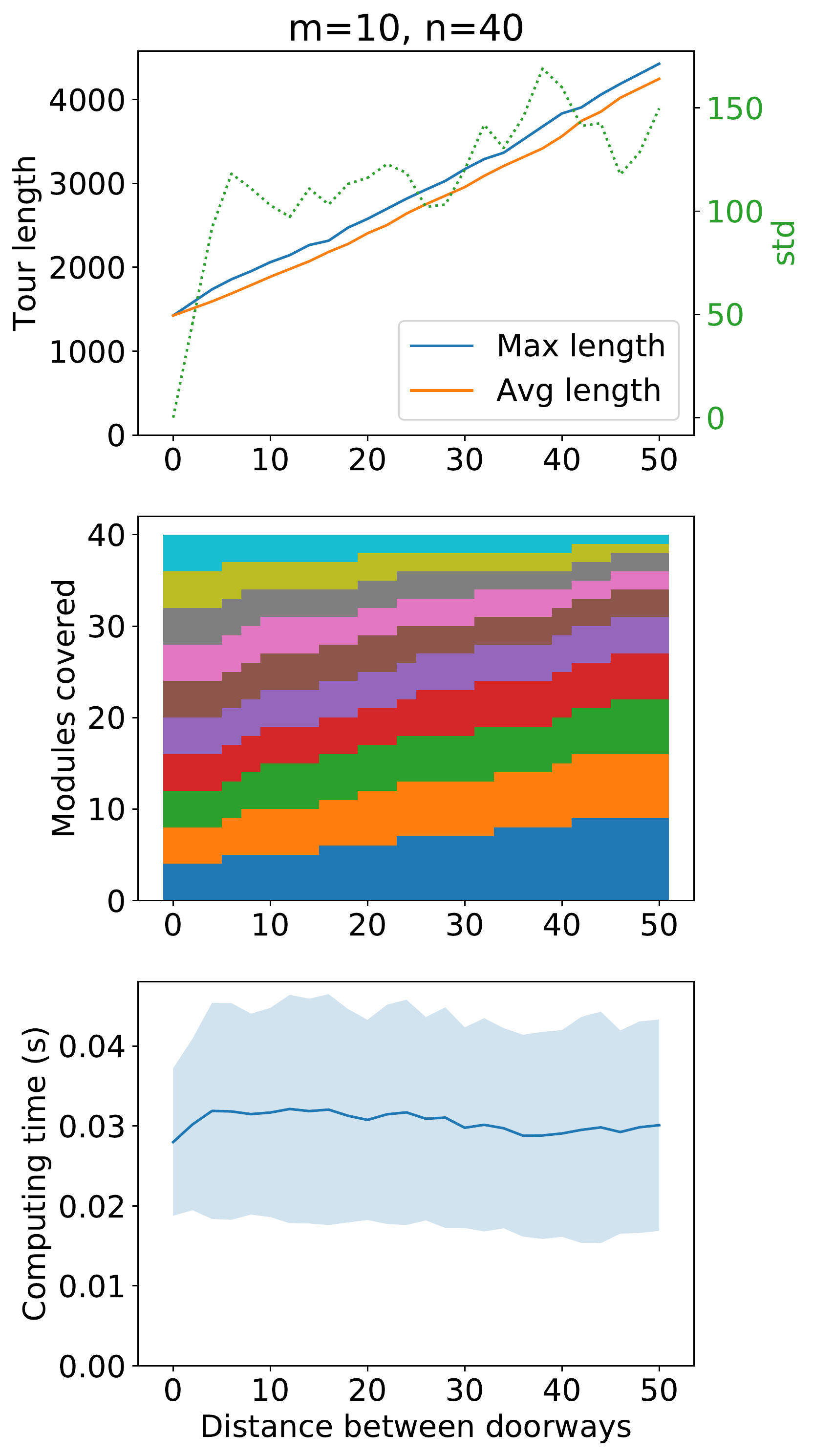}
		\caption{}
		\label{fig:hotel_varying_dist}
	\end{subfigure}
	\vspace{-15pt}
	\caption{
		Makespan (top row), robots-modules allocation (middle row), and computing time (bottom row) wrt varying the number of modules $n$ (\subref{fig:hotel_varying_N}), the number of robots $m$ (\subref{fig:hotel_varying_m}), and of the distance between doorways $t(d_i, d_{i+1})$ (\subref{fig:hotel_varying_dist}).
		The makespan plots show the makespan (max length), the average length of tours over the robots, and the corresponding standard variation.
		The robots-modules allocation plots have one color for each robot and show which modules are covered by which robots.
		The computing time plots show the average computing time, calculated over $50$ samples, and its standard deviation.}
	\label{fig:hotel_example}
\end{figure*}

\label{sec:experimental}
We compare our algorithm against two state-of-the-art mTSP algorithms: Frederickson \cite{frederickson1976approximation} and AHP-mTSP \cite{vandermeulen2019balancedTA}. These two algorithms are suboptimal and work for generic environments (see also \Cref{sec:relatedwork}). Frederickson works by computing a TSP over the whole environment and then splitting it into $m$ (the number of robots) tours.
AHP-mTSP starts from a random partition of the vertices of the graph of the environment in $m$ groups, each one assigned to a robot. It then applies a sequence of local operations (transfers, swaps, and improvements) on these groups, in order to balance the workload of the robots.   
To the best of our knowledge, as we discussed in \Cref{sec:relatedwork}, no algorithm specifically designed for modular environments is available in the literature.

Our algorithm and the Frederickson algorithm have been implemented in Python\footnote{Code available at \url{https://github.com/mirkosalaris/CoverageModularEnvironments}.}. For AHP-mTSP we use the original code, also written in Python and using the external Concorde solver \cite{concorde}, kindly provided by the authors of \cite{vandermeulen2019balancedTA}. For both Frederickson and our algorithm, we use the Christofides algorithm \cite{christofides1976worst} to compute approximated solutions to TSPs.
All computations have been performed on an AWS EC2 t2.large (2.3 GHz, 8 GB memory) instance with Ubuntu 16.04 AMI.

We consider environments built from a dataset of real environments, representing floors of schools\footnote{The dataset is obtained from a collection of blueprints and has been used in~\cite{Luperto2019}. The authors of~\cite{Luperto2019} kindly provided the dataset.}. Specifically, we build modular environments by repeating some base modules that represent three floors of real environments with different topologies and sizes (\Cref{fig:modules}). 
The graphs representing the base modules are built by manually extracting two types of vertices from the floor plans of the three real environments: centroids, that correspond to the geometrical centroids of individual rooms, and portals, that correspond to doors and passages between rooms. The centroid and the portals relative to the same room are connected by edges using Euclidean distance as metric $t$. Portals relative to the same room are connected to each other by edges using the $L_1$ norm as metric $t$. In this way, the graphs that represent the base modules reflect the structure and the shape of real environments, also in the case of non-convex rooms. The scale of the base modules, when not explicitly indicated, is estimated from the floor plans considering that doors are 80 cm wide. The metric $t$ is thus calculated assuming that robots move at constant speed (since it represents a time). 
 \Cref{fig:graph_overlaid} displays the graph extracted from module \littleModule\ overlapped to the floor plan of the real environment. For each module $i$, the doorway $d_{i}$ is selected randomly among the centroids of the module.

Before analyzing our algorithm and comparing it with the two state-of-the-art algorithms on complex modular environments, we consider a simple environment where all modules are identical.

\subsection{Environments with Identical Modules}
\label{subsec:experimental_single_repeated}

In this section, we consider modular environments composed of identical modules, like a tall building with identical floors. The repeated base module is module \middleModule, but results are similar for other base modules of \Cref{fig:modules}. 
We report the makespan of the solutions found by our algorithm, the number of modules allocated to each robot, and the computing time, according to the number of modules $n$, the number of robots $m$, and the distance between doorways of different modules $t(d_i, d_{i+1})$.
When not varying, we consider $n=30$ (or $n=40$), $m=10$, and $t(d_i, d_{i+1})=20$ for any $i$.
Results are shown in \Cref{fig:hotel_example}.

The computing times behave as expected wrt the number of modules $n$ (\Cref{fig:hotel_varying_N}) and the number of robots $m$ (\Cref{fig:hotel_varying_m}), and, as our algorithm abstracts from the values of the distances, they are rather independent of the distances between doorways (\Cref{fig:hotel_varying_dist}).
From the makespan plots, the standard deviation of the tour lengths over the robots is small for all the cases (notice that the scale of the standard deviation is much smaller than the scale of the tour length), showing that our algorithm performs a balanced division of effort over all the robots. This is confirmed by the curves showing the makespan (max length) and the average length of tours, which are very close to each other.

\Cref{fig:hotel_varying_m} shows that the advantage of using one further robot decreases when the total number of robots is already large. In this specific example with $n=30$, when $m > 18$, only $18$ robots are actually used because using more robots would not yield a better makespan.
Indeed, in any environment, the minimum makespan achievable is given by the time it takes for one robot to reach the last module, cover it, and return to the depot.
Whenever this makespan is reached, there are no advantages in using more robots.

From \Cref{fig:hotel_varying_dist}, we observe that when the distance between doorways is set to $0$, the standard deviation of the tour lengths is zero.
This is due to the fact that the number of modules is a multiple of the number of robots and that, in this case, all the modules are virtually connected to the depot $d_{1}$ because their doorways can be reached at no cost.
In the robots-modules allocation plot, we notice that, as the distance between doorways increases, modules near to the depot tend to be allocated in big chunks to the same robot and modules far from the depot tend to be allocated individually to different robots.
Indeed, when increasing the effort to reach the last module from the depot, the minimum achievable makespan increases.
This implies that robots that cover modules close to the depot have more time to cover multiple modules.


\subsection{Complex Modular Environments}
\label{subsec:experimental_general}
\begin{figure*}[t]
	\centering
	\begin{subfigure}[t]{0.3\textwidth}
		\centering
		\includegraphics[width=\textwidth]{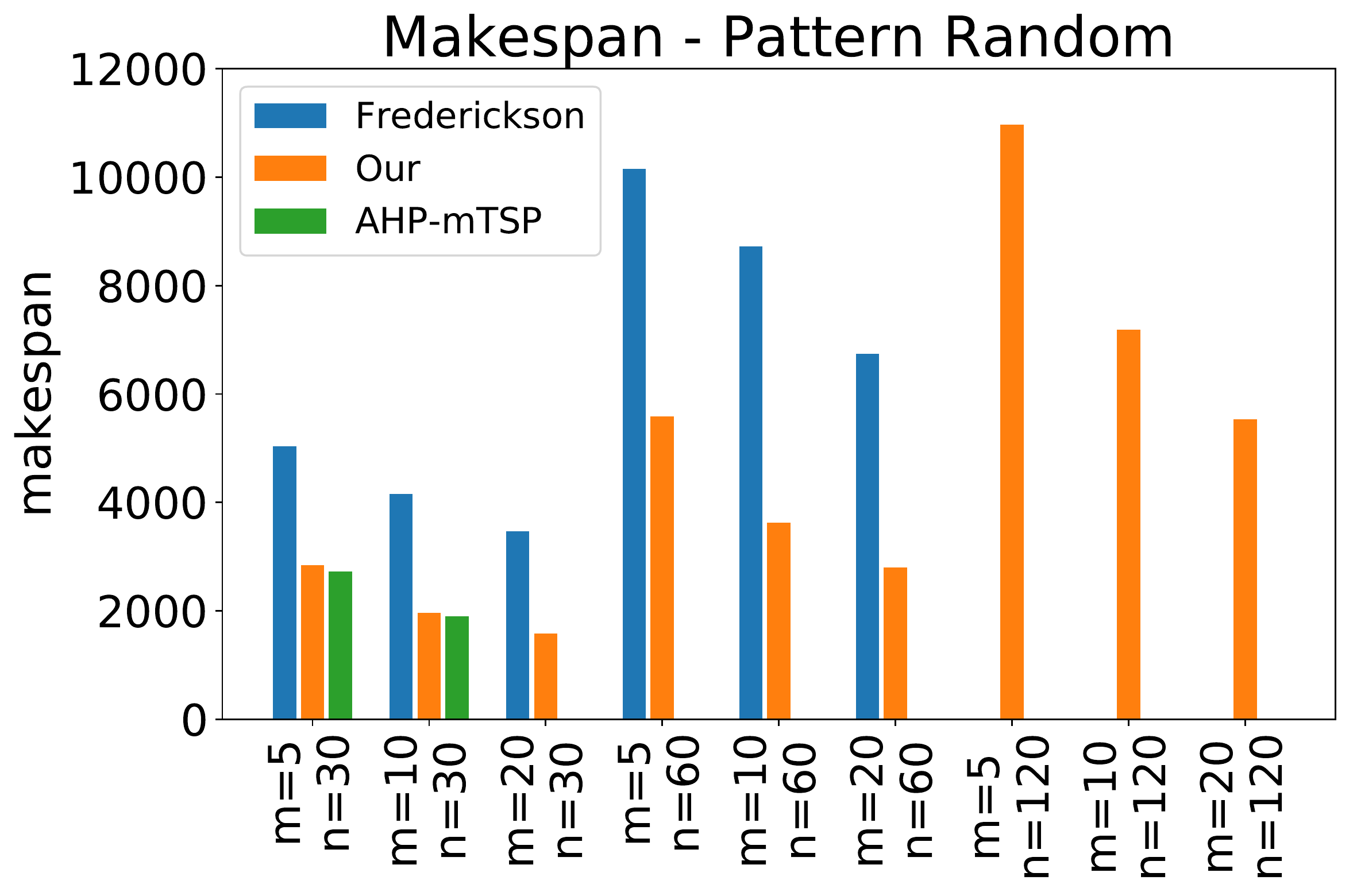}
		\caption{Makespan comparison}
		\label{fig:makespan_comparison}
	\end{subfigure}
	\hfill
	\begin{subfigure}[t]{0.3\textwidth}
		\centering
		\includegraphics[width=\textwidth]{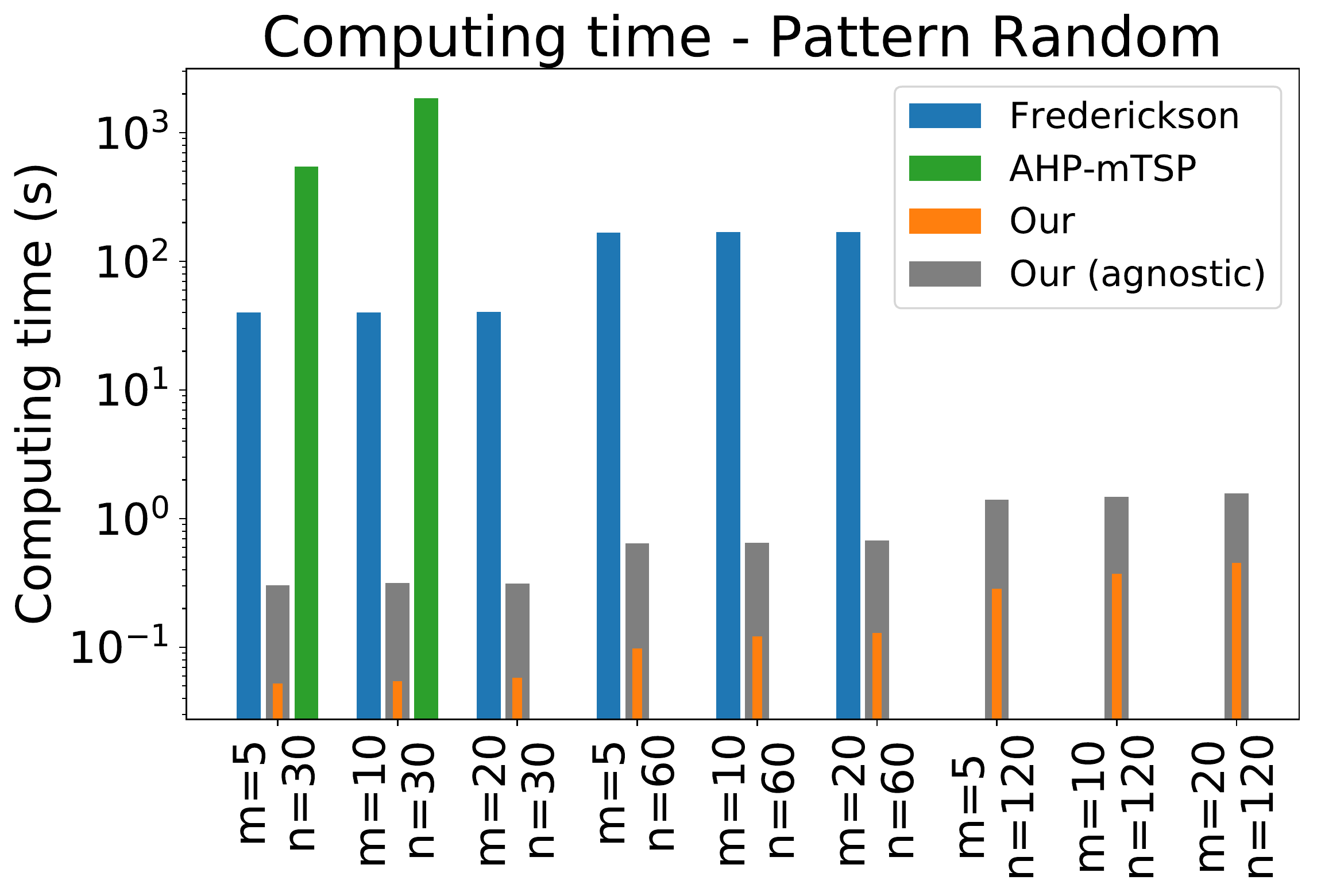}
		\caption{Computing time comparison}
		\label{fig:runtime_comparison}
	\end{subfigure}
	\hfill
	\begin{subfigure}[t]{0.3\textwidth}
		\centering
		\includegraphics[width=\textwidth]{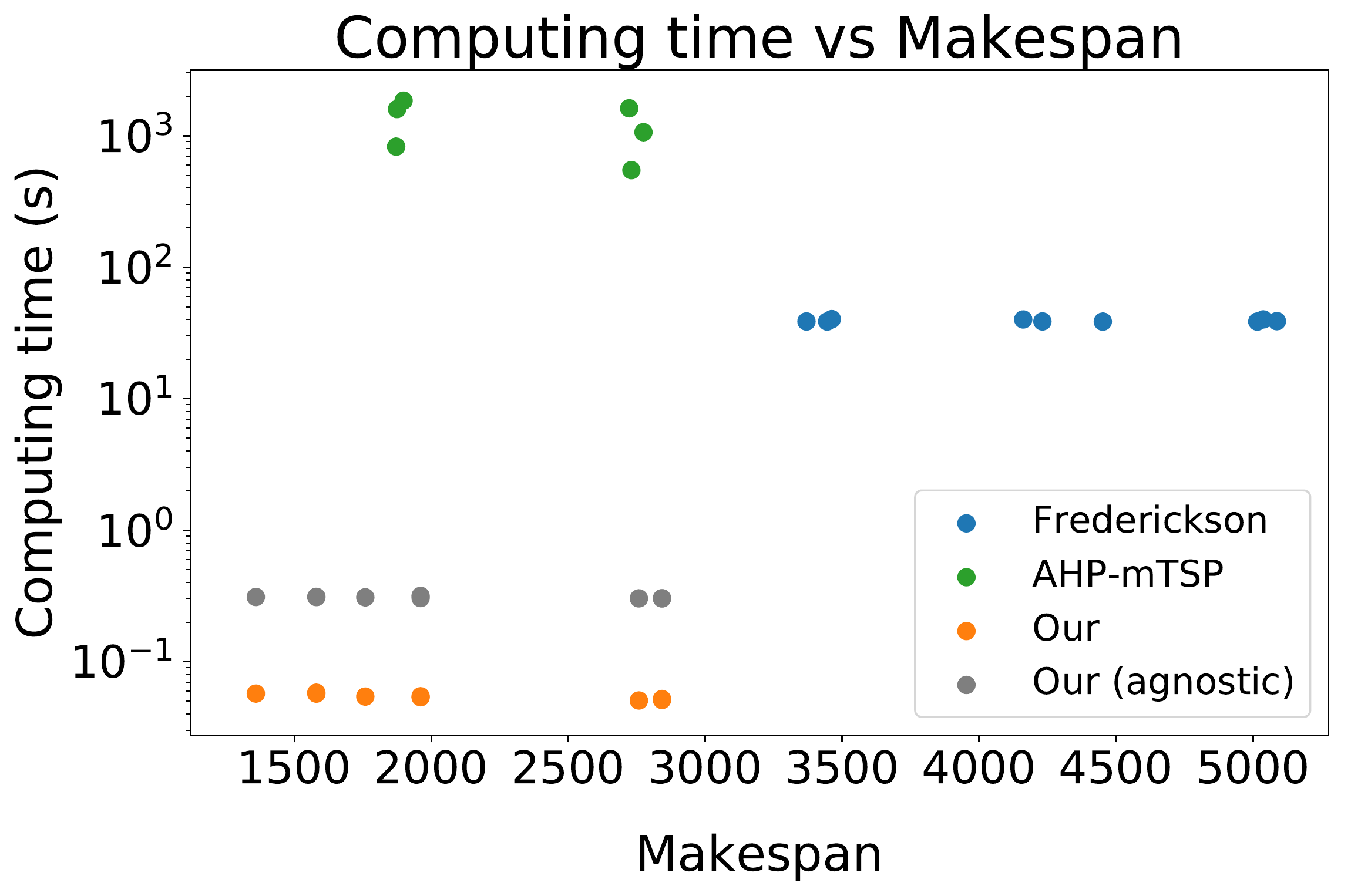}
		\caption{Computing time vs Makespan}
		\label{fig:scatter_comparison}
	\end{subfigure}

	\caption{
		Comparison of our algorithm, Frederickson, and AHP-mTSP.
		(\subref{fig:makespan_comparison}) compares the makespan of the three algorithms over instances that differ in the number of modules $n$ and the number of robots $m$.
		(\subref{fig:runtime_comparison}) shows a comparison of the computing time needed by the three algorithms.
		Note that the scale is logarithmic. These computing time values are single data points and not averages: the values are orders of magnitude different and noise does not affect the comparison.
		The agnostic version of our algorithm yields the same solutions as the base version, but it requires more time because it computes the TSP approximation of every module.
		(\subref{fig:scatter_comparison}) offers another perspective into the comparison of the three algorithms, showing all the data for the three patterns and for $m=\{5,10,20\}$, fixing $n=30$. Data points close to the origin are the best.
		Missing data points for Frederickson and AHP-mTSP are due to memory or timeout limitations.
	}
	\label{fig:comparison}
\end{figure*}
In this section, we compare our algorithm against Frederickson and AHP-mTSP in different modular environments varying the number $n$ of modules, the number $m$ of robots, and the patterns in which the three base modules \littleModule, \middleModule, and \bigModule\ are organized.
We consider three different patterns:
\begin{itemize} 
\item in environments of type \pattRandom\ modules are chosen randomly with a uniform probability (so that each base module is selected approximately $1/3$ of times with large $n$),
\item in environments of type \pattDec\, the first $n/3$ modules are of type \bigModule, then there are $n/3$ modules of type \middleModule, and the rest of the modules are of type \littleModule,
\item environments of type \pattInc\ have the same structure but reversed, starting from modules of type \littleModule\ and ending with modules of type \bigModule.
\end{itemize}
\noindent In \pattDec\ and \pattInc\ environments, what is decreasing and increasing is the size of the modules as the indexes of the modules grow.
We created a total of $27$ different environments generated by combining the three patterns, three values for the number of modules, $n=\{30, 60, 120\}$, and three values for the number of robots, $m=\{5,10,20\}$.
We fixed the distance between doorways to $20$ meters for all the environments, both because the impact of this distance has already been analyzed in \Cref{subsec:experimental_single_repeated} and because we have not observed anything relevant in changing it.

We set a timeout of $1$ hour for solving each instance. All instances have been solved by our algorithm.
Frederickson exceeds the available RAM for all the environments with $n=120$ modules. Finally, AHP-mTSP exceeds the timeout for instances with $n=60$ and $n=120$ and for instances with $n=30$ and $m=20$.

\Cref{fig:makespan_comparison} and \Cref{fig:runtime_comparison} show the experimental results for the $9$ environments of type \pattRandom. Results for the environments of types \pattDec\ and \pattInc\ are similar, also quantitatively (for given $n$ and $m$, the variations in the value of the makespan in the order of $10\%$) and are not reported here. 
Our algorithm consistently computes solutions with a makespan that is half of the makespan of the solutions returned by Frederickson. AHP-mTSP, for the instances in which it terminates within the timeout, provides solutions with comparable makespan wrt to our algorithm. However, when looking at the computing times, AHP-mTSP is orders of magnitude slower than our algorithm.

To show that our algorithm exploits the particular structure of modular environments and calculates integer solutions in a short computing time, we present the computing times for the base version, which takes advantage of knowing that the three base modules are repeated and computes the TSP approximation for the three base modules just once, and for the agnostic version, which blindly computes the TSP approximation for every module in the environment.
The agnostic version of the algorithm is slightly slower than the base version, as expected, but still much faster than Frederickson and AHP-mTSP (\Cref{fig:runtime_comparison}).
In \Cref{fig:scatter_comparison} we compare the overall performance of the three algorithms, in terms of the goodness of the solution found by the algorithms and the time required for the computation.
Our algorithm, in both the base and the agnostic version, is the only one that generates solutions with both low makespan and low computing time (see the data points close to the origin).


\section{Conclusions and future work}


In this paper, we have studied the multi-Traveling Salesperson Problem (mTSP) in modular environments, providing an efficient approximation algorithm that partitions the modules and assigns each group of adjacent modules to a different robot, obtaining a solution whose cost is within a fixed bound from the cost of an optimal solution. Experiments show that our approach effectively solves mTSP instances in large modular environments, outperforming state-of-the-art algorithms designed for mTSPs in generic environments. 

Future work will extend the results of this paper to modular environments that are not ``linear'', for example, those in which modules are arranged in trees, circles, or grids. Also, in environments with identical modules, the problem input can be exponentially compressed, implying that any algorithm running in $\textit{poly}(n)$ (included ours) is actually pseudo-polynomial in the problem instance size. This problem variant would deserve further investigation.
Another extension will consider the presence of  multiple ``doorways'' for each module, like in the case of a multi-floor building in which floors are connected to each other by different staircases and elevators. Moreover, the study of approximation algorithms that go beyond the idea of integer solutions could be addressed. Finally, steps towards the practical implementation of the proposed algorithm in robot platforms employed in real-world applications will be undertaken.

\bibliographystyle{abbrv}
\bibliography{ms}

\begin{thebibliography}{10}

\bibitem{anderberg1973cluster}
M.~Anderberg.
\newblock {\em Cluster analysis for applications}.
\newblock Academic Press, 1973.

\bibitem{averbakh1997p}
I.~Averbakh and O.~Berman.
\newblock $(p-1)(p+1)$-approximate algorithms for $p$-traveling salesmen
  problems on a tree with minmax objective.
\newblock {\em Discrete Applied Mathematics}, 75(3):201--216, 1997.

\bibitem{barrientos2011aerial}
A.~Barrientos, J.~Colorado, J.~Del~Cerro, A.~Martinez, C.~Rossi, D.~Sanz, and
  J.~Valente.
\newblock Aerial remote sensing in agriculture: A practical approach to area
  coverage and path planning for fleets of mini aerial robots.
\newblock {\em Journal of Field Robotics}, 28(5):667--689, 2011.

\bibitem{bektas2006multiple}
T.~Bektas.
\newblock The multiple traveling salesman problem: an overview of formulations
  and solution procedures.
\newblock {\em Omega}, 34(3):209--219, 2006.

\bibitem{bellmore1974transformation}
M.~Bellmore and S.~Hong.
\newblock Transformation of multisalesman problem to the standard traveling
  salesman problem.
\newblock {\em Journal of the ACM}, 21(3):500--504, 1974.

\bibitem{chandran2006Clustering}
N.~Chandran, T.~Narendran, and K.~Ganesh.
\newblock A clustering approach to solve the multiple travelling salesmen
  problem.
\newblock {\em International Journal of Industrial and Systems Engineering},
  1(3):372--387, 01 2006.

\bibitem{choset2001coverageSurvey}
H.~Choset.
\newblock Coverage for robotics--a survey of recent results.
\newblock {\em Annals of Mathematics and Artificial Intelligence},
  31(1-4):113--126, 2001.

\bibitem{christofides1976worst}
N.~Christofides.
\newblock Worst-case analysis of a new heuristic for the travelling salesman
  problem.
\newblock Technical report, Carnegie-Mellon Univ Pittsburgh Pa Management
  Sciences Research Group, 1976.

\bibitem{concorde}
W.~Cook.
\newblock Concorde tsp solver, 2016.

\bibitem{correll-inspection}
N.~{Correll} and A.~{Martinoli}.
\newblock Multirobot inspection of industrial machinery.
\newblock {\em IEEE Robotics Automation Magazine}, 16(1):103--112, 2009.

\bibitem{eilon1974distribution}
S.~Eilon, C.~Watson-Gandy, N.~Christofides, and R.~{de Neufville}.
\newblock Distribution management-mathematical modelling and practical
  analysis.
\newblock {\em IEEE Transactions on Systems, Man, and Cybernetics},
  SMC-4(6):589--589, 1974.

\bibitem{frederickson1976approximation}
G.~Frederickson, M.~Hecht, and C.~Kim.
\newblock Approximation algorithms for some routing problems.
\newblock In {\em Proceedings of the 17th Annual Symposium on Foundations of
  Computer Science}, pages 216--227, 1976.

\bibitem{galceran2013survey}
E.~Galceran and M.~Carreras.
\newblock A survey on coverage path planning for robotics.
\newblock {\em Robotics and Autonomous Systems}, 61(12):1258--1276, 2013.

\bibitem{Latah2016}
M.~Latah.
\newblock Solving multiple {TSP} problem by k-means and crossover based
  modified {ACO} algorithm.
\newblock {\em International Journal of Engineering Research and Technology},
  5:430--434, 2016.

\bibitem{Luperto2019}
M.~Luperto and F.~Amigoni.
\newblock Predicting the global structure of indoor environments: A
  constructive machine learning approach.
\newblock {\em Autonomous Robots}, 43(4):813--835, Apr 2019.

\bibitem{malik2007approximation}
W.~Malik, S.~Rathinam, and S.~Darbha.
\newblock An approximation algorithm for a symmetric generalized multiple
  depot, multiple travelling salesman problem.
\newblock {\em Operations Research Letters}, 35(6):747--753, 2007.

\bibitem{nallusamy2010optimization}
R.~Nallusamy, K.~Duraiswamy, R.~Dhanalaksmi, and P.~Parthiban.
\newblock Optimization of non-linear multiple traveling salesman problem using
  k-means clustering, shrink wrap algorithm and meta-heuristics.
\newblock {\em International Journal of Nonlinear Science}, 9(2):171--177,
  2010.

\bibitem{nikitenko-cleaning}
A.~Nikitenko, J.~Grundspenkis, A.~Liekna, M.~Ekmanis, G.~Kulikovskis, and
  I.~Andersone.
\newblock Multi-robot system for vacuum cleaning domain.
\newblock In {\em Proceedings of the International Conference on Practical
  Applications of Agents and Multi-Agent Systems}, pages 363--366, 2014.

\bibitem{OKONJOADIGWE1988159}
C.~Okonjo-Adigwe.
\newblock An effective method of balancing the workload amongst salesmen.
\newblock {\em Omega}, 16(2):159 -- 163, 1988.

\bibitem{portugal-patrolling}
D.~Portugal and R.~Rocha.
\newblock Multi-robot patrolling algorithms: examining performance and
  scalability.
\newblock {\em Advanced Robotics}, 27(5):325--336, 2013.

\bibitem{svestka1973computational}
J.~Svestka and V.~Huckfeldt.
\newblock Computational experience with an m-salesman traveling salesman
  algorithm.
\newblock {\em Management Science}, 19(7):790--799, 1973.

\bibitem{vandermeulen2019balancedTA}
I.~Vandermeulen, R.~Gro{\ss}, and A.~Kolling.
\newblock {Balanced Task Allocation by Partitioning the Multiple Traveling
  Salesperson Problem}.
\newblock In {\em Proceedings of the International Conference on Autonomous
  Agents and MultiAgent Systems}, pages 1479--1487, 2019.

\bibitem{yu2017minimum}
M.~Yu, V.~Nagarajan, and S.~Shen.
\newblock Minimum makespan vehicle routing problem with compatibility
  constraints.
\newblock In D.~Salvagnin and M.~Lombardi, editors, {\em Integration of AI and
  OR Techniques in Constraint Programming}, pages 244--253. Springer, 2017.

\end{thebibliography}

\end{document}